\documentclass[12pt]{article}

\usepackage{color,url}
\usepackage{graphicx}
\usepackage{geometry} 
\geometry{a4paper}
\usepackage{amsmath,amssymb,amsfonts,bbm,amsthm}

\newcommand{\Es}{\mbox{\bf E}}    

\newtheorem{thm}{Theorem}
\newtheorem{remark}{Remark}

\title{Parameter estimation for the discretely observed fractional Ornstein-Uhlenbeck process and the Yuima R package}
\author{Alexandre Brouste\footnote{E-mail: \texttt{alexandre.brouste@univ-lemans.fr}, corresponding author.}\\ Laboratoire Manceau  de Math\'ematiques\\Universit\'e du Maine\\Avenue Olivier Messiaen - 72100 Le Mans, France \and Stefano M. Iacus\footnote{E-mail: \texttt{stefano.iacus@unimi.it}}\\Department of Economics, Business and Statistics\\University of Milan\\Via Conservatorio, 7 - 20122 Milan, Italy}
\date{}

\begin{document}

\maketitle

\begin{abstract}
This paper proposes consistent and asymptotically Gaussian estimators for the parameters $\lambda$, $\sigma$ and $H$ of the discretely observed fractional Ornstein-Uhlenbeck process solution of the stochastic differential equation $d Y_t = -\lambda Y_t dt + \sigma d W_t^H$, where $(W_t^H, t\geq 0)$ is the fractional Brownian motion. For the estimation of the drift $\lambda$, the results are obtained only in the case when $\frac12 < H < \frac34$. This paper also provides ready-to-use software for the R statistical environment based on the YUIMA package.
\end{abstract}

\section{Introduction}

Statistical inference for parameters of ergodic diffusion processes observed on discrete increasing
grid have been much studied. Local asymptotic normality (LAN) property of the likelihoods
have been shown in \cite{gobet1} for elliptic ergodic diffusion, under proper conditions for
the drift and the diffusion coefficient, and a mesh satisfying
$$ \Delta_N \longrightarrow 0 \quad \mbox{and} \quad N\Delta_N \longrightarrow +\infty $$
when the size of the sample $N$ grows to infinity. 
Estimation procedure have been studied by many authors, mainly in the one-dimensional case
(see, for instance, \cite{FZ, Kess} and \cite{Yoshida1} in the multidimensional setting). All estimators in the previous works
are based on contrasts (for contrasts framework, see \cite{GC}), assuming in the general case, that for 
some $p >1$, as $n \longrightarrow +\infty$, $N \Delta_N^p \longrightarrow 0$. In particular, for Ornstein-Uhlenbeck process,  transitions densities are known,
and all have been treated, as remarked in \cite{Jacod1}. 

In the fractional case, we consider the fraction Ornstein-Uhlenbeck process (fOU),  the solution of
$$dY_t = -\lambda Y_t dt + \sigma dW_t^H $$
where $W^H=\left( W^H_t, t\geq 0 \right)$ is a normalized fractional Brownian motion (fBM), {\it i.e.} the zero mean Gaussian processes with covariance function
$$\Es W_s^H W_t^H = \frac{1}{2} \left( |s|^{2H}+|t|^{2H}- |t-s|^{2H}\right)$$ with Hurst exponent $H\in(0,1)$.  The fOU process is  neither Markovian nor a semimartingale for $H\neq\frac{1}{2}$ but
remains Gaussian and ergodic (see \cite{cheridito}).  For $H > \frac12$, it even presents the long-range dependance property that makes it useful for different applications in biology, physics, ethernet traffic or finance.

Statistical large sample properties of Maximum Likelihood Estimator of the drift parameter in the continuous observations case have been treated in \cite{Coutin2,BK10, Lototsky2,Klep02a} for different applications. Moreover, asymptotical properties of the Least Squares Estimator have been studied in~\cite{hunual}. 

In the discrete case and fractional case, we can cite few works on the topic. On the one hand, very recent works give methods to estimate the drift $\lambda$  by contrast procedure \cite{ludena,Neue} or the drift $\lambda$ and the diffusion coefficient $\sigma$ with discretization procedure of integral transform \cite{chinois}. In these papers, the Hurst exponent is supposed to be known and only consistency is obtained.   On the other hand, methods to estimate the Hurst exponent $H$ and the diffusion coefficient are presented in \cite{berzinleon} with classical order 2 variations convolution filters.  

To the best of our knowledge, nothing have been done,  to have a complete estimation procedure
that could estimate all Hurst exponent, diffusion coefficient and drift parameter with central limit theorems and this is the gap we fill in this paper. Moreover, estimates of $H$, $\sigma$ and $\lambda$ presented in this paper slightly differ from all those studied previously.

In Section \ref{sec1} we review the basic facts of stochastic differential equations driven by the fractional Brownian motion and we introduce the basic notations and assumptions.
Section \ref{sec2} presents consistent and asymptotically Gaussian estimators of the parameters of the fractional Ornstein-Uhlenbeck process from discrete observations. In Section \ref{sec3} we present ready-to-use software  for the R statistical environment which allows the user to simulate and estimate the parameters of the fOU process. We further present Monte-Carlo experiments to test the performance of the estimators under different sampling conditions.

\section{Model specification}\label{sec1}

Let $X=(Y_t, t\geq 0)$ be a fractional Ornstein-Uhlenbeck process (fOU), {\it i.e.} the solution of 
\begin{equation}\label{eq:model}
Y_t = y_0 -\lambda \int_0^t Y_s ds + \sigma W^{H}_t,  \quad t >0, \quad Y_0=y_0,
\end{equation}
where unknown parameter $\vartheta=\left(\lambda,\sigma,H\right)$ belongs to an open subset $\Theta$ of $(0,\Lambda)\times [ \underline{\sigma},\overline{\sigma}] \times (0,1) $, $0<\Lambda <+\infty$, $0< \underline{\sigma}<\overline{\sigma}< +\infty$ and $W^H=(W^H_t, t\geq 0)$ is a standard fractional Brownian motion \cite{kolmogorov,mandel}
of Hurst parameter $H\in(0,1)$, {\it i.e.} a Gaussian centered process of covariance function
$$ \Es W^H_t W^H_s=\frac{1}{2}\left( t^{2H} + s^{2H} - |t-s|^{2H}  \right)  .$$
It is worth emphasizing that in the case $H=\frac{1}{2}$, $W^\frac{1}{2}$ is the classical Wiener process
The fOU process is  neither Markovian nor a semimartingale for $H\neq\frac{1}{2}$ but
remains Gaussian and ergodic.  For $H > \frac12$, it even presents the long-range dependance property  (see \cite{cheridito}).

The present work exposes an estimation procedure for estimating all three components of
$\vartheta$ given the regular discretization of the sample path $Y^T=(Y_t, 0\leq t\leq T)$, precisely
$$\left( X_n : = Y_{n\Delta_N} , n=0\ldots N\right)\,,$$
where $T=T_N = N \Delta_N \longrightarrow +\infty$ and $\Delta_N \longrightarrow 0$ as $N\longrightarrow +\infty$. 

In the following, convergences $\stackrel{\cal L}{\longrightarrow}$, $\stackrel{p}{\longrightarrow} $ and $\stackrel{a.s.}{\longrightarrow} $ stand  respectively for the convergence in law, the convergence in probability
and the  almost-sure convergence.

\section{Estimation procedure}\label{sec2}
Contrary to the previous works on the subject, we consider here the problem of estimation of $H$, $\sigma$ and $\lambda$ when all parameters are unknown,  using discrete observations from the fractional Ornstein-Uhlenbeck process. Due to the fact that one can estimate  $H$ and $\sigma$ without the knowledge of $\lambda$, our approach consists naturally in a two step procedure.
\subsection{Estimation of the Hurst exponent $H$ and the diffusion coefficient $\sigma$ with quadratic generalized variations}\label{sec:H}

The key point of this paper is that the Hurst exponent $H$  and the diffusion coefficient $\sigma$ can be estimated 
without estimating $\lambda$.

Let $\mathbf{a}=(a_0,\ldots,a_K)$ be a discrete filter of length
$K+1$, $K\in\mathbb{N}$, and of order $L\geq 1$, $K\geq L$, {\it i.e.}
\begin{equation}
\sum_{k=0}^K a_k k^\ell=0\quad \mbox{for} \quad0\le \ell\le 
L-1 \,\,\,\,\,\,\mbox{and}\,\,\,\,\,\,   \sum_{k=0}^K a_k k^{L}\neq 0.
\end{equation}
Let it be normalized with
\begin{equation}
\sum_{k=0}^{K} (-1)^{1-k} a_k = 1\,.
\end{equation}
In the following, we will also consider dilatated filter $\mathbf{a}^2$ associated to $\mathbf{a}$ defined by
$$a^2_k=\left\{\begin{array}{cc} a_{k'}&\mbox{ if } k=2k'\\
0&\mbox{ sinon. }\end{array}\right. \quad \mbox{ for } \quad 0\le k\le 2K \,.$$
Since $\underset{k=0}{\overset{2K}{\sum}}a_k^2 k^r=2^r\underset{k=0}{\overset{K}{\sum}}
k^ra_k$, filter $\mathbf{a}^2$ as the same order than  $\mathbf{a}$.  We denote by
$$ V_{N,\mathbf{a}} = \sum_{i=0}^{N-K} \left( \sum_{k=0}^K  a_k X_{i+k} \right)^2$$
the generalized quadratic variations associated to the filter $\mathbf{a}$ (see for instance \cite{istaslang}) and,  finally, 
$$ \widehat{H}_N = \frac{1}{2}\log_2 \frac{V_{N,\mathbf{a}^2}}{V_{N,\mathbf{a}}} $$
and
$$ \widehat{\sigma}_N = \left( 2 \cdot - \frac{V_{N,\mathbf{a}}}{\sum_{k,\ell} a_k a_\ell |k-\ell |^{2\widehat{H}_N} \Delta_N^{2\widehat{H}_N} }\right)^{\frac{1}{2}} .$$

\vskip 24pt
\begin{thm}\label{thm:Hs} Let $\mathbf{a}$ be a filter of order $L \geq 2$.  Then, both estimators $\widehat{ H}_N$ and  $\widehat{\sigma}_N$ are  strongly consistent, {\it i.e.}
$$ (\widehat{ H}_N ,  \widehat{\sigma}_N) \stackrel{a.s.}{\longrightarrow}   (H, \sigma) \quad \mbox{as $N \longrightarrow +\infty$}.$$
Moreover, we have asymptotical normality property, {\it i.e.} as $N \rightarrow +\infty$, for all $H \in (0,1)$,
$$  
\sqrt{N}( \widehat{H}_N-H) \stackrel{\cal L}{\longrightarrow}  {\cal N} (0, \Gamma_1(\vartheta,\mathbf{a}))
$$
and
$$
 \frac{\sqrt{N}}{\log N} ( \widehat{\sigma}_N-\sigma) \stackrel{\cal L}{\longrightarrow}  {\cal N} (0, \Gamma_2(\vartheta,\mathbf{a}))
$$
where  $\Gamma_1(\vartheta,\mathbf{a})$ and $\Gamma_2(\vartheta,\mathbf{a})$  symmetric definite positive matrices depending on $\sigma$, $H$, $\lambda$ and the filter $\mathbf{a}$. 
\end{thm}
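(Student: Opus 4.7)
The plan is to reduce everything to the well-studied generalized quadratic variation theory for fractional Brownian motion (Istas-Lang, Coeurjolly) and then invoke the Breuer-Major central limit theorem. The crucial observation is that because the filter has order $L\geq 2$, the smooth drift component of $Y$ contributes only a lower-order perturbation that can be controlled uniformly.

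The first step is a decomposition of the filtered increments:
$$\sum_{k=0}^K a_k X_{i+k} \;=\; \sigma \sum_{k=0}^K a_k W^H_{(i+k)\Delta_N} \;+\; D_{i,N}, \qquad D_{i,N} = -\lambda\sum_{k=0}^K a_k \int_0^{(i+k)\Delta_N} Y_s\,ds.$$
Since $L\geq 2$ forces $\sum_k a_k = 0$ and $\sum_k k a_k=0$, a first-order Taylor expansion of $t\mapsto \int_0^t Y_s\,ds$ around $i\Delta_N$ kills its constant and linear parts, and the almost-sure H\"older regularity of $Y$ of any order $H-\varepsilon$ gives $|D_{i,N}| = O(\Delta_N^{1+H-\varepsilon})$ a.s. The pure fBM contribution to each squared increment is of order $\sigma^2\Delta_N^{2H}$, so $D_{i,N}^2$ is of strictly smaller order; summing over $i$ and handling the cross term by Cauchy-Schwarz yields $V_{N,\mathbf{a}} = \sigma^2 V_{N,\mathbf{a}}^{W^H}(1+o(1))$, and similarly for $V_{N,\mathbf{a}^2}$.

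The second step applies the standard fBM theory to the rescaled Gaussian sequence $\Delta_N^{-H}\sum_k a_k W^H_{(i+k)\Delta_N}$, which is stationary centered with covariance $\rho_a(j)=-\tfrac{1}{2}\sum_{k,\ell}a_k a_\ell |k-\ell+j|^{2H}$ decaying as $|j|^{2H-2L}$. For $L\geq 2$, $\rho_a\in\ell^2(\mathbb{Z})$ for every $H\in(0,1)$, which is precisely the summability condition needed by the Breuer-Major theorem applied to the Hermite-rank-2 functional $x\mapsto x^2-1$. This gives joint asymptotic normality of $(V_{N,\mathbf{a}},V_{N,\mathbf{a}^2})$ at rate $\sqrt{N}$, while strong consistency follows from Gaussian ergodicity together with fourth-moment Borel-Cantelli bounds. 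Consistency and asymptotic normality of $\widehat{H}_N$ then follow by the delta method applied to $(u,v)\mapsto \tfrac{1}{2}\log_2(v/u)$.

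The third step handles $\widehat{\sigma}_N$ via a second delta method, writing $\widehat{\sigma}_N^2$ as a smooth function of $V_{N,\mathbf{a}}$ and $\widehat{H}_N$. The slow rate $\sqrt{N}/\log N$ is forced by the factor $\Delta_N^{2\widehat{H}_N}$ in the denominator of $\widehat{\sigma}_N$: its derivative in $H$ involves $\log\Delta_N \sim -\log N$, so the $N^{-1/2}$ fluctuations of $\widehat{H}_N$ propagate into fluctuations of $\widehat{\sigma}_N$ of size $\log(N)\,N^{-1/2}$, which dominate the contribution of $V_{N,\mathbf{a}}$ itself. I expect the main obstacle to be step one: one needs a sharp uniform control on $D_{i,N}$ \emph{and} on its stochastic cross interaction with the Gaussian part, sharp enough that the drift error is $o_p(1/\sqrt{N})$ at the variance scale (heuristically this requires an implicit constraint such as $\sqrt{N}\,\Delta_N^{2-\varepsilon}\to 0$ on the sampling mesh), so that neither $\Gamma_1$ nor $\Gamma_2$ is contaminated by the drift and the limit can be read off from the pure fBM theory.
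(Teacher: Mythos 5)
Your strategy is sound but it is genuinely different from the paper's. You decompose $Y$ additively as $\sigma W^H$ plus the drift integral, kill the drift pathwise using the two vanishing moments of the filter together with the H\"older regularity of $Y$, and then run Breuer--Major on the filtered increments of the pure fBM. The paper never separates the drift from the noise: it passes to the stationary solution $Y^\dag_t=\sigma\int_{-\infty}^t e^{-\lambda(t-u)}dW^H_u$ (noting $Y^\dag-Y\to 0$ exponentially fast), computes the variogram $v(t)=\Es (Y_0^\dag)^2-\Es Y_t^\dag Y_0^\dag$ in closed form, shows $v(t)=\tfrac{\sigma^2}{2}|t|^{2H}+r(t)$ with $r(t)=o(|t|^{2H})$ and $|r^{(4)}(t)|\le G|t|^{2H-3-\varepsilon}$, and then cites the Istas--Lang theorem on generalized quadratic variations of Gaussian processes, which delivers consistency and both CLTs in one stroke. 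What the paper's route buys is that the drift is absorbed into the covariance structure, so the only work is an explicit (deterministic) variogram computation and a derivative bound; there is no stochastic cross term to control. What your route buys is modularity and transparency: you do not need the closed-form stationary covariance, and the role of $L\ge 2$ is visible twice (annihilating the drift to order $\Delta_N^{1+H-\varepsilon}$, and making $\rho_{\mathbf a}(j)=O(|j|^{2H-2L})$ square-summable for every $H\in(0,1)$, which is exactly why the restriction $H<3/4$ of order-one variations disappears). Your diagnosis of the $\sqrt N/\log N$ rate for $\widehat\sigma_N$ matches the mechanism the paper relies on. Two cautions on your step one: the a.s.\ H\"older constant of $Y$ on $[0,T_N]$ grows (slowly) with $T_N$, so the bound on $D_{i,N}$ must be made uniform in $i\le N$; and the cross term does not average to zero, so Cauchy--Schwarz gives a relative error of order $\Delta_N^{1-\varepsilon}$, forcing $\sqrt N\,\Delta_N^{1-\varepsilon}\to 0$ rather than the weaker $\sqrt N\,\Delta_N^{2-\varepsilon}\to 0$ you guess. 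This is not a defect of your route relative to the paper's, though: the paper's own remainder satisfies only $\sum_{k,\ell}a_ka_\ell\, r((k-\ell)\Delta_N)=O(\Delta_N^{2H+1-\varepsilon})$, which produces the same relative bias of order $\Delta_N^{1-\varepsilon}$; the paper simply does not state the resulting implicit mesh condition (it is automatic in the in-fill setting $\Delta_N=1/N$ of Istas--Lang, but not under the paper's standing assumption $N\Delta_N\to\infty$ alone).
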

\begin{proof}

The solution of \eqref{eq:model} can be explicited
$$ Y_t = x_0 e^{-\lambda t} + \sigma \int_0^t e^{-\lambda(t-s)} dW^{H}_s\,, $$
where the integral is defined as a Riemann-Stieljes pathwise integral. Let us consider the stationary centered Gaussian solution
$$ Y^\dag_t = \sigma \int_{-\infty}^t e^{-\lambda(t-u)} dW^{H}_u.$$
We have also, 
$$ Y^\dag_t - Y_t = e^{-\lambda t} \left( Y^\dag_0 - y_0 \right) \stackrel{a.s.}{\longrightarrow}  0.$$
It is known (see \cite[Lemma 2.1]{cheridito}) that
\begin{eqnarray*}
\Es Y_0^\dag ( Y_0^\dag - Y_t^\dag)&=& - \sigma^2 H(2H-1) e^{-\lambda t} \int_{-\infty}^0 e^{\lambda u}  \left( \int_{0}^t e^{\lambda v} (v-u)^{2H-2} dv \right) du.
\end{eqnarray*}
Let $v(t)$ denote the variogram of $Y_t^\dag$. We now show that 
$$v(t) =  \Es \left(Y_0^{\dag}\right)^2 - \Es Y^\dag_t Y^\dag_0 =  \frac{ \sigma^2}{2} |t|^{2H} + r(t)$$
where $r(t)=o(|t|^{2H})$  as $t$ tends to zero. Indeed,
\begin{eqnarray*}
v(t)& =&  -\sigma^2 H(2H-1)   \int_{-\infty}^0 e^{\lambda u}  \left( \int_{0}^t e^{-\lambda (t- v)} (v-u)^{2H-2} dv \right) du \\
& =&  -\sigma^2 H(2H-1)   \int_{-\infty}^0 e^{\lambda u}  \left( \int_{0}^t e^{-\lambda r} (t-r-u)^{2H-2} dr \right) du \\
& =&  -\sigma^2 H(2H-1)   \int_{0}^{\infty}    \int_{0}^t e^{-\lambda (r+u)} (t-r+u)^{2H-2} dr  du \\ 
& =&  -\sigma^2 H(2H-1)   \int_{0}^{\infty}    \int_{u}^{u+t} e^{-\lambda w} (t-w+2u)^{2H-2} dw  du \\ 
& =&  -\sigma^2 H(2H-1)   \int_{0}^{\infty}   e^{-\lambda w} \left( \int_{\max(0,w-t)}^{w} (t-w+2u)^{2H-2} du \right) dw \\
& =&  -\frac{1}{2}\sigma^2 H(2H-1)   \int_{0}^{\infty}   e^{-\lambda w} \left( \int_{|t-w|}^{t+w} x^{2H-2} dx \right) dw. 
\end{eqnarray*}
Thus,
\begin{eqnarray*}
\frac{dv}{dt}(t)&=& -\frac{1}{2}\sigma^2 H  (2H-1) \int_{0}^{\infty}   e^{-\lambda w} \left(  (t+w)^{2H-2}-|t-w|^{2H-2}\right) dw \\
&=& -\frac{1}{2}\sigma^2 H  (2H-1)  t^{2H-1}  \int_{0}^{\infty}   e^{-\lambda ty } \left(  (1+y)^{2H-2}-|1-y|^{2H-2}\right) dy  \\
 &=&-\frac{1}{2}\sigma^2 H (2H-1)  t^{2H-1}  \underbrace{\int_{0}^{\infty} \left(  (1+y)^{2H-2}-|1-y|^{2H-2}\right) dy}_{<\infty} + \tilde{r}(t) \\
 & =& \sigma^2 H   t^{2H-1} +  \tilde{r}(t)\end{eqnarray*}
with 
$$ \tilde{r}(t) = -\frac{1}{2}\sigma^2 H (2H-1)  t^{2H-1} \sum_{i=1}^\infty \int_{0}^{\infty} \frac{(-\lambda t y)^i}{i !} \left(  (1+y)^{2H-2}-|1-y|^{2H-2}\right) dy.$$
Therefore, we proved that 
$$ v(t) =  \frac{ \sigma^2}{2} |t|^{2H} + r(t).$$

Now, applying results in \cite[Theorem 3(i)]{istaslang}, the proof of  Theorem \ref{thm:Hs} is complete because the following conditions are fulfilled:
 \begin{itemize}
 \item firstly, $r(t)=o(|t|^{2H})$  as $t$ tends to zero,
\item secondly, for classical generalized quadratic variations or order $L\geq 2$ (for instance $L=2$),
$$ |r^{(4)}(t)|\leq G |t|^{2H+1-\varepsilon-4}$$ 
with $2H+1 -\varepsilon > 2H$ and $4 > 2H+1 -\varepsilon+1/2$ for  $\varepsilon<1$
and any $H\in(1/2,1)$.
\end{itemize}
\end{proof}

\begin{remark}\label{rem:1} We have two useful examples of filters. Classical filters  of order $L\geq1$ are defined by 
$$a_k=c_{L,k}=\frac{(-1)^{1-k}}{2^K}\begin{pmatrix} K\\k
\end{pmatrix}=\frac{(-1)^{1-k}}{2^K}\frac{K!}{k!(K-k)!}
\quad  \mbox{ pour } \quad 0\le k\le K .$$

Daubechies filters of even order can also be considered (see \cite{daub}), for instance
the order 2 Daubechies' filter:
$$\frac{1}{\sqrt{2}}(.4829629131445341,-.8365163037378077,.2241438680420134,.1294095225512603).$$
\end{remark}

\begin{remark} For classical order 1 quadratic variations ($L=1$)  and $\mathbf{a}=\left( -\frac{1}{2},\frac{1}{2} \right)$ we can also obtain consistency for any value of $H$, but the central limit theorem holds only for $H < \frac34$ (see \cite{istaslang}).

\end{remark}




\subsection{Estimation of the drift parameter $\lambda$ when both $H$ and $\sigma$ are unknown}\label{sec:lambda}
From  \cite{hunual}, we know the following result
$$\lim_{t\longrightarrow \infty} \mbox{var}(Y_t)=\lim_{t\longrightarrow \infty} \frac{1}{t} \int_0^t Y_t^2 dt=  \frac{\sigma^2 \Gamma\left( 2H +1\right)}{2\lambda^{2H}}=: \mu_2\,.$$
This gives a natural plug-in estimator of $\lambda$, namely
$$\widehat{\lambda}_N=  \left( \frac{2 \,\widehat{\mu}_{2,N}}{\widehat{\sigma}_N^2 \Gamma\left(2 \widehat{H}_N+1\right)}\right)^{-\frac{1}{2\widehat{H}_N}} $$
where  $\widehat{\mu}_{2,N}$ is the empirical moment of order 2, {\it i.e} 
$$ \widehat{\mu}_{2,N} = \frac{1}{N} \sum_{n=1}^N X^2_n.$$

\begin{thm}\label{thm:lambda} Let $H\in\left( \frac{1}{2},\frac{3}{4}\right)$ and a mesh satisfying the condition 
$ N \Delta_N^p\longrightarrow 0$, $p > 1$, as $N \longrightarrow + \infty$. Then, as $N \longrightarrow + \infty$,
$$ \widehat{\lambda}_N  \stackrel{a.s.}{\longrightarrow}  \lambda $$
and $$ \sqrt{T_N} \left(  \widehat{\lambda}_N  - \lambda\right)\stackrel{\cal L}{\longrightarrow} {\cal N} (0, \Gamma_3(\vartheta)),
$$
where $\Gamma_3(\vartheta)= \lambda \left( \frac{\sigma_H}{2H} \right)^2 $  and
\begin{equation}\label{eq:sigma}
 \sigma^2_H=(4H-1) \left( 1+ \frac{\Gamma(1-4H) \Gamma(4H-1)}{\Gamma(2-2H)\Gamma(2H)}\right).
 \end{equation}

\end{thm}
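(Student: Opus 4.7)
The plan is to view $\widehat{\lambda}_N = g(\widehat{\mu}_{2,N}, \widehat{\sigma}_N, \widehat{H}_N)$ as a smooth function
$$g(\mu,s,h) = \left( \frac{2\mu}{s^{2}\Gamma(2h+1)} \right)^{-1/(2h)}$$
of the three estimators, and to apply the delta method, reducing the theorem to the joint asymptotics of $(\widehat{\mu}_{2,N}, \widehat{\sigma}_N, \widehat{H}_N)$. The map $g$ is continuously differentiable in a neighbourhood of $(\mu_{2}, \sigma, H)$ whenever $H\in(1/2,3/4)$ and $\sigma>0$, and a direct computation gives $\partial_\mu g(\mu_2,\sigma,H) = -\lambda/(2H\mu_2)$.

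First I would establish the almost-sure convergence of $\widehat{\mu}_{2,N}$. Theorem \ref{thm:Hs} already provides a.s.\ convergence of $\widehat{H}_N$ and $\widehat{\sigma}_N$. For the empirical second moment, I would reintroduce the stationary version $Y^{\dag}$ from the proof of Theorem \ref{thm:Hs}: it is a centred Gaussian ergodic process with $\Es (Y_0^{\dag})^2 = \mu_2$ (see \cite{cheridito}), so Birkhoff's ergodic theorem gives $T^{-1}\int_0^{T} (Y_t^{\dag})^2 dt \to \mu_2$ a.s. The transient $Y_t - Y_t^{\dag} = e^{-\lambda t}(Y_0^{\dag} - y_0)$ decays exponentially, so the same limit holds with $Y$ in place of $Y^{\dag}$. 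Rewriting $\widehat{\mu}_{2,N} = T_N^{-1}\sum_{n=1}^{N}\Delta_N Y_{n\Delta_N}^2$ as a Riemann sum and using pathwise H\"older regularity of $Y$ together with $\Delta_N \to 0$ controls the discretisation error, so $\widehat{\mu}_{2,N} \to \mu_2$ a.s.\ and continuous mapping applied to $g$ delivers $\widehat{\lambda}_N \to \lambda$ a.s.

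For the central limit theorem, the Taylor expansion of $g$ at $(\mu_2,\sigma,H)$ reads
$$\widehat{\lambda}_N - \lambda \;=\; \partial_\mu g \cdot (\widehat{\mu}_{2,N} - \mu_2) + \partial_s g \cdot (\widehat{\sigma}_N - \sigma) + \partial_h g \cdot (\widehat{H}_N - H) + \mbox{rem},$$
and the crucial observation is a comparison of rates. Theorem \ref{thm:Hs} gives $\widehat{H}_N - H = O_P(N^{-1/2})$ and $\widehat{\sigma}_N - \sigma = O_P(N^{-1/2}\log N)$, while the natural rate for the empirical variance is $T_N^{-1/2}=(N\Delta_N)^{-1/2}$. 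Because $\Delta_N \to 0$, and in fact $\Delta_N(\log N)^2\to 0$ under $N\Delta_N^p \to 0$ with $p>1$, the rate $T_N^{-1/2}$ is the slowest of the three, so the $\widehat{\sigma}_N$ and $\widehat{H}_N$ contributions are $o_P(T_N^{-1/2})$ and the asymptotic law of $\sqrt{T_N}(\widehat{\lambda}_N - \lambda)$ coincides with that of $\partial_\mu g \cdot \sqrt{T_N}(\widehat{\mu}_{2,N} - \mu_2)$.

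It therefore suffices to prove
$$\sqrt{T_N}(\widehat{\mu}_{2,N} - \mu_2) \;\stackrel{\mathcal{L}}{\longrightarrow}\; \mathcal{N}\bigl(0,\; \mu_2^2 \sigma_H^2/\lambda\bigr),$$
with $\sigma_H^2$ as in \eqref{eq:sigma}; indeed $(\partial_\mu g)^2 \cdot \mu_2^2 \sigma_H^2/\lambda = \lambda (\sigma_H/(2H))^2 = \Gamma_3(\vartheta)$, matching the stated asymptotic variance. For the continuous functional $T^{-1}\int_0^{T} Y_t^{2}\, dt$, a CLT with precisely this variance is obtained in \cite{hunual}, and it is exactly the classical Breuer--Major threshold $H<3/4$ that is used there, which is why the theorem requires $H<3/4$ (for $H\geq 3/4$ the normalisation and limit cease to be Gaussian, and $\sigma_H^2$ itself blows up since $\Gamma(1-4H)$ has a pole at $H=3/4$). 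The main obstacle of the proof is then to show that the Riemann-sum discretisation error
$$R_N \;=\; \widehat{\mu}_{2,N} - \frac{1}{T_N}\int_0^{T_N} Y_t^2\, dt$$
is $o_P(T_N^{-1/2})$. Using $(H-\varepsilon)$-H\"older regularity of the fOU paths and stationarity, one expects $R_N = O_P(\Delta_N^{H-\varepsilon})$, and the mesh hypothesis $N\Delta_N^p \to 0$ with $p>1$ sufficiently close to $1$ makes this negligible against $T_N^{-1/2}$. A Slutsky argument then concludes.
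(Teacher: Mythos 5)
Your proposal is correct and follows essentially the same route as the paper: cite the continuous-time ergodic limit and CLT for $\frac{1}{T_N}\int_0^{T_N} Y_t^2\,dt$ from \cite{hunual} (where the Breuer--Major threshold $H<\frac34$ enters), control the Riemann-sum discretisation error via H\"older regularity and the mesh condition $N\Delta_N^p\to 0$, and then apply a Taylor/delta-method argument to $g(\mu_2,H,\sigma)$ in which only the $\partial_{\mu_2}g$ term survives at rate $\sqrt{T_N}$ because $\widehat H_N$ and $\widehat\sigma_N$ converge at the faster rates $N^{-1/2}$ and $N^{-1/2}\log N$. Your explicit rate comparison and the computation $(\partial_\mu g)^2\mu_2^2\sigma_H^2/\lambda=\lambda(\sigma_H/(2H))^2$ are exactly what the paper does, only stated more explicitly.
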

\begin{proof} Let us note $T_N=N \Delta_N$.
It had been shown in \cite{hunual} that, as $T_N \rightarrow +\infty$ (or as $N  \rightarrow +\infty$),
\begin{equation}\label{eq:mu1} 
\frac{1}{T_N} \int_0^{T_N} X_t^2 dt  \stackrel{a.s.}{\longrightarrow}  \kappa_H \lambda^{-2H}
\end{equation}
and, with straightforward calculus,
\begin{equation}\label{eq:mu2} 
 \sqrt{T_N} \left( \frac{1}{T_N} \int_0^{T_N} X_t^2dt  - \kappa_H \lambda^{-2H} \right) \stackrel{\cal L}{\longrightarrow}  {\cal N}( 0, \left( \sigma_H \kappa_H \right)^2 \lambda^{-4H-1} ) 
 \end{equation}
where $ \kappa_H=\sigma^2\frac{\Gamma(2H+1)}{2}$  and $\sigma_H$ is defined by \eqref{eq:sigma}. 
Let us denote $ \widehat{\mu}_{2,N}$ the discretization of the integral 
$$  \widehat{\mu}_{2,N} = \frac{1}{N} \sum_{n=1}^N X^2_n \quad \mbox{and} \quad  
\mu_2= \kappa_H \lambda^{-2H}.$$
Then
$$
\sqrt{T_N} \left(   \widehat{\mu}_{2,N}  -\mu_2  \right)  =  \sqrt{T_N} \left(   \widehat{\mu}_{2,N}  - \frac{1}{T_N} \int_0^{T_N} X_t^2dt  \right) + \sqrt{T_N} \left(   \frac{1}{T_N} \int_0^{T_N} X_t^2dt   -\mu_2  \right).$$
As $\left( X_t, t \geq 0\right)$ is a Gaussian process and H\"{o}lder of order $\frac{1}{2} < H < \frac{3}{4}$, we have $\sqrt{T_N} \left(   \widehat{\mu}_{2,N}  - \frac{1}{T_N} \int_0^{T_N} X_t^2dt  \right) \stackrel{p}{\longrightarrow}  0$  as $N \longrightarrow +\infty$ provided that 
$N \Delta_N^p\longrightarrow 0$, $p >1$, (see \cite[Lemma 8]{Kess}), we deduce from  \eqref{eq:mu1} and  \eqref{eq:mu2} that
\begin{equation}
\label{eq:conv6}
\sqrt{T_N} \left(   \widehat{\mu}_{2,N}  -\mu_2  \right) \stackrel{\cal L}{\longrightarrow}  {\cal N}\left( 0, \left( \sigma_H \kappa_H \right)^2 \lambda^{-4H-1}\right).
\end{equation}
Let us introduce the following two quanitites
$$ M_N= \begin{pmatrix}
  \widehat{\mu}_{2,N}\\
\widehat{H}_N \\
\widehat{\sigma}_N \\
\end{pmatrix} \quad \mbox{and}  \quad m=\begin{pmatrix} \mu_2 \\ H \\  \sigma \end{pmatrix}  
.$$
Finally, results obtained in Theorem \ref{thm:Hs} and the convergence in \eqref{eq:conv6} gives consistency of $M_N$, i.e.  $M_N \stackrel{P}{\longrightarrow} m$ as as $N \longrightarrow + \infty$.
Let us further define
$$ g(\mu_2,H,\sigma)= \left( \frac{2  \mu_2}{\sigma^2 \Gamma(2H+1)}  \right)^{-\frac{1}{2H}}.$$
The derivatives of $g$  with respect to $\sigma$, $H$ and $\mu_2$ are bounded when $0<\Lambda <+\infty$, $0< \underline{\sigma}<\overline{\sigma}< +\infty$ and $\frac{1}{2} < H < \frac{3}{4}$.  
Therefore, as $\Delta_N  \left( \log N \right)^2 \longrightarrow 0$  as $N \longrightarrow + \infty$, we can obtain by Taylor expansion that $$ \sqrt{T_N} \left( g(M_N) -g(m) \right) \stackrel{\cal L}{\longrightarrow} {\cal N}(0, g_{\mu_2}^\prime(m)^2   \left( \sigma_H \kappa_H \right)^2 \lambda^{-4H-1})$$
or 
$$ \sqrt{T_N} \left(\widehat{\lambda}_N-\lambda \right) \stackrel{\cal L}{\longrightarrow} {\cal N}(0,\Gamma_3(\vartheta))$$
where $\Gamma_3(\vartheta)= g^\prime_{\mu_2}(m)^2  \left( \sigma_H \kappa_H \right)^2 \lambda^{-4H-1} = \lambda \left( \frac{\sigma_H}{2H} \right)^2$, $g^\prime_{\mu_2}(.)$ is the derivative of $g$ with respect to $\mu_2$ and
$$ \widehat{\lambda}_N \stackrel{a.s.}{\longrightarrow}  \lambda$$
as $N \longrightarrow + \infty$.


\end{proof}

\begin{remark}
The different conditions on $\Delta_N$ raise the question of whether such a rate actually exists.
One possible mesh is $\Delta_N= \frac{\log N}{N}$.
\end{remark}

\begin{remark}\label{rem:sigma}
As in the classical case $H=\frac12$, the limit variance $\Gamma_3(\vartheta)$ does not depend on the diffusion coefficient $\sigma$. Let us also notice that the quantity $\sigma_H^2$ appearing in $\Gamma_3(\vartheta)$ is an increasing function of $H$.
\end{remark}

\section{Statistical software and Monte-Carlo analysis}\label{sec3}
In this section we present a brief introduction to the \texttt{yuima} package for \texttt{R} statistical environment \cite{R}. The   \texttt{yuima} package is a comprehensive framework, based on the S4 system of classes and methods, which allows for the description of solutions of stochastic differential equations. Although we cannot give details here, the user can specify a stochastic differential equation of the form
$$
dX_t = b(t,X_t)dt + \sigma(t,X_t)dW_t^H + c(t,X_t)Z_t
$$
where  the coefficients $b(\cdot,\cdot)$, $\sigma(\cdot, \cdot)$ and $c(\cdot,\cdot)$ are entirely specified by the user, even in parametric form; $(Z_t, t \geq 0)$ is a L\'evy process (for more information on L\'evy processes, see \cite{bertoin,sato} and $(W_t^H, t \geq 0)$ is a fractional Brownian motion (recall that $(W_t^\frac12, t \geq 0)$ is the standard Brownian motion). The L\'evy process $(Z_t, t \geq 0)$ and the fractional Brownian motion $(W_t^H, t \geq 0)$ can be present at the same time only when $H=\frac12$, but all other combinations are possible.
The  \texttt{yuima} package provides the user, not only the simulation part, but also several parametric and non-parametric estimation procedures. In the next section we present an example of use only for simulation and estimation of the fractional Ornstein-Uhlenbeck process considered in this paper.

To test the performance of the estimators for finite samples, we run a Monte-Carlo analysis. We consider different setup for the parameters even outside the region $\frac12 <H <\frac34$ and different sample size with large and small values of $T$ in order to test the performance of the estimator of the drift parameter when  the stationarity is not reached by the process.
All numerical experiments presented in the following  have been done with the \texttt{yuima} package \cite{YUIMA}.

\subsection{Example of numerical simulation and estimation of the fOU process with the \texttt{yuima} package}\label{sec:sim1}
With the  \texttt{yuima} package the fractional Gaussian noise is simulated with the Wood and Chan method \cite{fGnSim2} or other techniques.
We present below how to simulate one sample path of the fractional Ornstein-Uhlenbeck process with Euler-Maruyama method.
For instance, loading the package with
\begin{verbatim}
library(yuima)
\end{verbatim}
we can simulate a regularly sampled path of the following model
$$ X_t= 1 - 2 \int_0^t X_t dt +  dW^H_t, \qquad H=0.7, $$
with
\begin{verbatim}
samp <-setSampling(Terminal=100, n=10000)
mod <- setModel(drift="-2*x", diffusion="1",hurst=0.7)
ou <- setYuima(model=mod, sampling=samp)
fou <- simulate(ou, xinit=1)
\end{verbatim}
The estimation procedure of the Hurst parameter have been implemented in \texttt{qgv} function. 
 In order to estimate only the parameter $H$, one can use 
\begin{verbatim}
qgv(fou)
\end{verbatim}
that works also for non linear fractional diffusions (see \cite{phd}). 
The  procedure for joint estimation of
the Hurst exponent $H$, diffusion coefficient $\sigma$ and drift parameter $\lambda$ is called
\texttt{lse(,frac=TRUE)}.  So for example, in order to estimate the three different parameters $H$, $\lambda$ and $\sigma$, one can use 
\begin{verbatim}
lse(fou,frac=TRUE)
\end{verbatim}
which uses by default the order 2  Daubechies filter (see Remark \ref{rem:1}) if the user does not specify the \texttt{filter} argument.

\subsection{Performance of the Hurst parameter and diffusion coefficient estimation}

In this first simulation part, we present mean average values and standard deviation values for both estimators $\widehat{H}_N$ and $\widehat{\sigma}_N$ (see Section \ref{sec:H} for the definitions) 
with 500 Monte-Carlo replications. This have been done for different Hurst exponents $H$ and different
diffusion coefficients $\sigma$ in the model \eqref{eq:model}, the parameter $\lambda$ being fixed equal to 2. The results are presented in Table \ref{tab:1} and Table \ref{tab:2} for different values of the horizon time $T_N$ and the sample size $N$.

\begin{table}[ht] \small
\centering
\begin{tabular}{|c|c|c|c|}
\hline
$\widehat{H}$\phantom{$\widehat{H}$} & $H=0.5$ & $H=0.7$ & $H=0.9$\\
\hline
\hline
$\sigma=1$  & 0.499 & 0.697   &  0.898  \\
 & (0.035) & (0.033)  & (0.031) \\ 
\hline
$\sigma=2$&0.498 & 0.700 & 0.898\\
& (0.033)  &  (0.034) & (0.033)\\
\hline
\end{tabular}
\hfill
\begin{tabular}{|c|c|c|c|}
\hline
$\widehat{\sigma}$\phantom{$\widehat{H}$} & $H=0.5$ & $H=0.7$ & $H=0.9$\\
\hline
\hline
 
$\sigma=1$ &    1.024 & 1.016 & 1.081\\
& (0.262) & (0.282) & (0.437) \\
\hline

\hline
$\sigma=2$& 2.035 & 2.073 & 2.213 \\
& (0.510) & (0.564) & (1.110)\\
\hline
\end{tabular}
\caption{Mean average (and standard deviation in parenthesis) of 500 Monte-Carlo simulations
for the estimation of $H$ (left) and $\sigma$ (right) for different cases. Here $T=100$, $N=1000$ and $\lambda=2$.}\label{tab:1}
\end{table}

\begin{table}[ht]\small
\centering
\begin{tabular}{|c|c|c|c|}
\hline
$\widehat{H}$\phantom{$\widehat{H}$} & $H=0.5$ & $H=0.7$ & $H=0.9$\\
\hline
\hline
$\sigma=1$  & 0.500 & 0.700   &  0.900  \\
 & (0.003) & (0.003)  & (0.003) \\ 
\hline
$\sigma=2$&0.500 & 0.700 & 0.900\\
& (0.004)  &  (0.003) & (0.003)\\
\hline
\end{tabular}
\hfill
\begin{tabular}{|c|c|c|c|}
\hline
$\widehat{\sigma}$\phantom{$\widehat{H}$} & $H=0.5$ & $H=0.7$ & $H=0.9$\\
\hline
\hline
 
$\sigma=1$ &    1.000 & 1.001 & 0.999\\
& (0.025) & (0.026) & (0.036) \\
\hline
$\sigma=2$& 2.001 & 2.002 & 1.997 \\
& (0.053) & (0.053) & (0.073)\\
\hline
\end{tabular}
\caption{Mean average (and standard deviation in parenthesis) of 500 Monte-Carlo simulations
for the estimation of $H$ (left) and $\sigma$ (right) for different cases, and for $T_N=100$, $N=100000$ and $\lambda=2$.}\label{tab:2}
\end{table}

Contrary to the estimation of the drift (see Section \ref{sim:lambda}), we have consistent estimates of $H$ and $\sigma$ for any values of $T_N$. Only the size of the sample $N$ have influence on the performance of the estimate.

\subsection{Plug-in for the estimation of  drift parameter $\lambda$}\label{sim:lambda}

In this second simulation part, we present  mean average values and standard deviation values for the estimator $\widehat{\lambda}_N$ (see Section \ref{sec:lambda} for the definition) of  the drift with 500 Monte-Carlo replications. This have been done for different values of $\lambda$ and $H$ in model \eqref{eq:model}, the diffusion coefficient $\sigma$ being fixed to 1 (see Remark \ref{rem:sigma}). The results are presented in Table \ref{tab:3} for different values of the horizon time $T_N$ and the sample size $N$.

\begin{table}[ht] \small
\centering
\begin{tabular}{|c|c|c|c|}
\hline
& $H=0.5$ & $H=0.6$ & $H=0.7$\\
\hline
\hline
$\lambda=0.5$  & 0.093 &  0.214 &  0.353 \\
& (0.037) & (0.057) &  (0.069) \\
\hline
$\lambda=1$& 0.138 &  0.276 &  0.432 \\
& (0.052) &  (0.068) & (0.078) \\
\hline
\end{tabular}
\hfill
\begin{tabular}{|c|c|c|c|}
\hline
& $H=0.5$ & $H=0.6$ & $H=0.7$\\
\hline
\hline
$\lambda=0.5$ & 0.476 & 0.514 & 0.605\\
& (0.148) & (0.166) & (0.298)\\
\hline
$\lambda=1$& 0.906 &  0.940 & 1.005 \\
& (0.227) & (0.238) & (0.412)\\
\hline
\end{tabular}
\caption{Mean average (and standard deviation in parenthesis) of 500 Monte-Carlo simulation
for the estimation of $\lambda$ for different values of $H$ and $\lambda$.  
Here $\sigma=1$ and
$T_N=1$ and 
$N=100000$ (left) and $T_N=100$ and 
$N=1000$ (right).}\label{tab:3}
\end{table}

We can see in Table \ref{tab:3} that the values of $T_N$ is important for the estimation of the drift. Actually, the consistency of the estimates are valid for increasing values of $T_N$ and decreasing values of the mesh size $\Delta_N$. Moreover, the bigger $H$, the harder  the estimation of the drift parameter. This phenomena can be explained by the long-range dependence property of the fOU process. It is the same for $\lambda$ ; as $\lambda$ increases, its estimation is harder (see Remark \ref{rem:sigma}).  It can be explained by the fact that when $\lambda$ is bigger, the fOU process enters faster in its stationary behavior where it is more difficult to detect the trend.

Finally, in order to illustrate the asymptotical normality for the estimator $\widehat{\lambda}$ of $\lambda$, we present in Figure \ref{fig:density} the kernel estimation of the density.

\begin{figure}[ht]
\centering
\includegraphics[width=10cm]{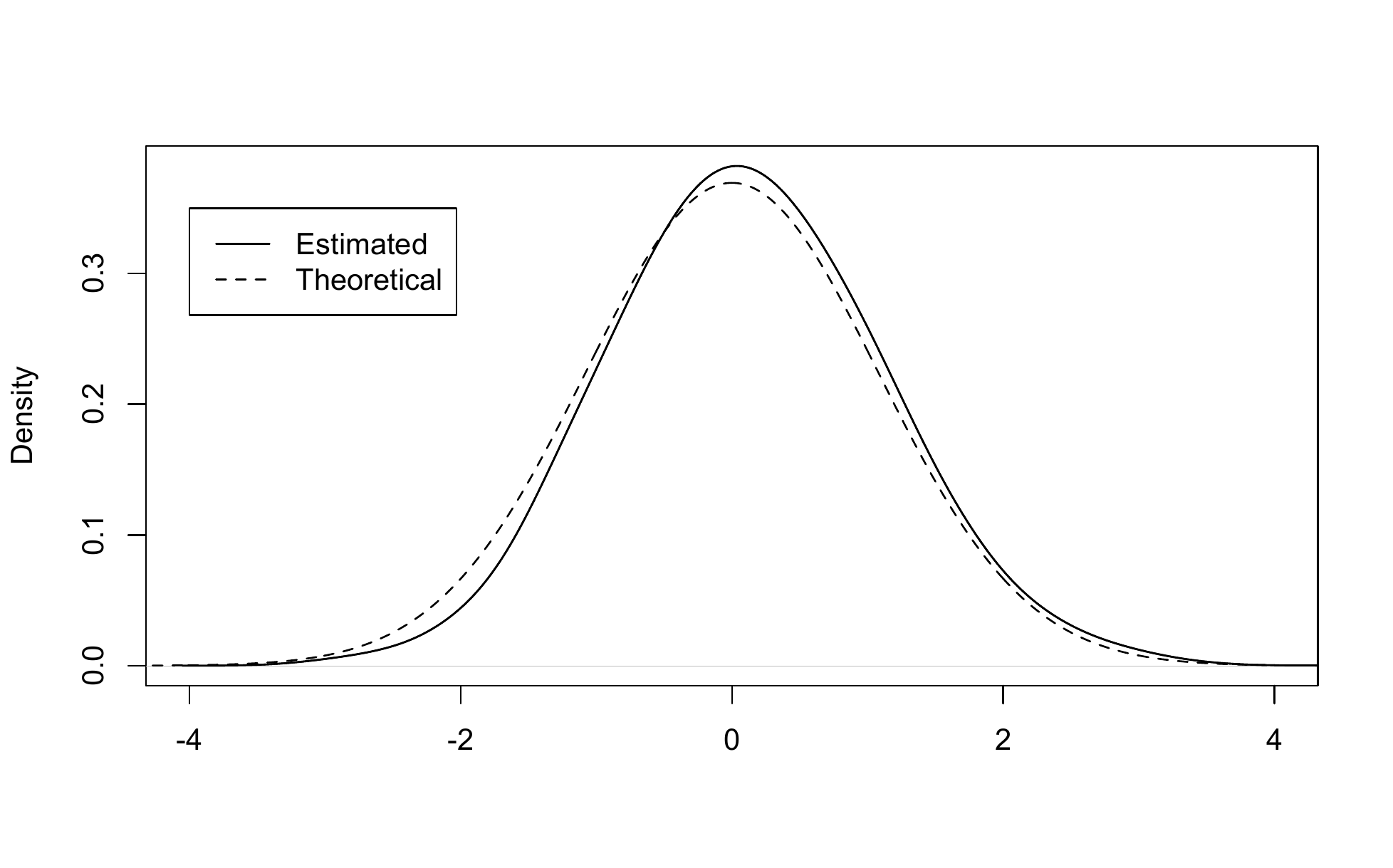}
\caption{Kernel estimation for the density of $\left( \sqrt{T_N} \left( \widehat{\lambda}^{(m)}_N-\lambda \right)\right)_{m=1\ldots M}$, $M=5000$, for $T_N=1000$ and $T_N=100000$ (fill line) and the theoretical Gaussian density ${\cal N}(0,\Gamma_3(\vartheta))$ (dashed line) for $\vartheta=(\lambda,\sigma,H)=(0.3,1,0.7)$ (for the value of  $\Gamma_3(\vartheta)$ see Theorem \ref{thm:lambda}).}\label{fig:density}
\end{figure}

\section*{Acknowledgments} 
We would like to thank Marina Kleptsyna for the discussions and her interest for this work. Computing resources have been financed by Mostapad project in
CNRS FR 2962. This work has been supported by the project PRIN 2009JW2STY, Ministero dell'Istruzione dell'Universit\`a e della Ricerca.

\bibliographystyle{plain}
\bibliography{recherchebib}

\begin{thebibliography}{10}

\bibitem{Coutin2}
B.~Bercu, L.~Coutin, and N.~Savy.
\newblock Sharp large deviations for the fractional {O}rnstein-{U}hlenbeck
  process.
\newblock {\em Teoriya Veroyatnostei i ee Primeneniya}, 2010.

\bibitem{bertoin}
J.~Bertoin.
\newblock {\em L\'evy Processes}.
\newblock Cambridge University Press, Cambridge, 1998.

\bibitem{berzinleon}
C.~Berzin and J.~Leon.
\newblock Estimation in models driven by fractional brownian motion.
\newblock {\em Annales de l'Institut Henri Poincar\'e}, 44(2):191--213, 2008.

\bibitem{BK10}
A.~Brouste and M.~Kleptsyna.
\newblock Asymptotic properties of {MLE} for partially observed fractional
  diffusion system.
\newblock {\em Statistical {I}nference for {S}tochastic {P}rocesses},
  13(1):1--13, 2010.

\bibitem{cheridito}
P.~Cheridito, H.~Kawaguchi, and M.~Maejima.
\newblock Fractional {O}rnstein-{U}hlenbeck processes.
\newblock {\em Electronic Journal of Probability}, 8(3):1--14, 2003.

\bibitem{Lototsky2}
I.~Cialenco, S.~Lototsky, and J.~Pospisil.
\newblock Asymptotic properties of the maximum likelihood estimator for
  stochastic parabolic equations with additive fractional {B}rownian motion.
\newblock {\em Stochastics and Dynamics}, 9(2):169--185, 2009.

\bibitem{daub}
I.~Daubechies.
\newblock {\em Ten Lectures on Wavelets}.
\newblock SIAM, 1992.

\bibitem{FZ}
D.~Florens-Zmirou.
\newblock Approximate discrete time schemes for statistics of diffusion
  processes.
\newblock {\em Statistics}, 20:263--284, 1989.

\bibitem{GC}
V.~Genon-Catalot.
\newblock Maximum constrast estimation for diffusion processes from discrete
  observation.
\newblock {\em Statistics}, 21:99--116, 1990.

\bibitem{gobet1}
E.~Gobet.
\newblock Lan property for ergodic diffusions with discrete observations.
\newblock {\em Annales de l'Institut Henri Poincar\'e}, 38(5):711--737, 2002.

\bibitem{hunual}
Y.~Hu and D.~Nualart.
\newblock Parameter estimation for fractional ornstein-uhlenbeck processes.
\newblock {\em Statistics and Probability Letters}, 80(11-12):1030--1038, 2010.

\bibitem{istaslang}
J.~Istas and G.~Lang.
\newblock Quadratic variations and estimation of the local hšlder index of a
  gaussian process.
\newblock {\em Annales de l'Institut Henri Poincar\'e}, 23(4):407--436, 1997.

\bibitem{Jacod1}
J.~Jacod.
\newblock Inference for stochastic processes.
\newblock {\em Statistics}, Prepublication 683, 2001.

\bibitem{Kess}
M.~Kessler.
\newblock Estimation of an ergodic diffusion from discrete observations.
\newblock {\em Scandinavian Journal of Statistics}, 24:211--229, 1997.

\bibitem{Klep02a}
M.~Kleptsyna and A.~Le Breton.
\newblock Statistical analysis of the fractional {O}rnstein-{U}hlenbeck type
  process.
\newblock {\em Statistical Inference for Stochastic Processes}, 5:229--241,
  2002.

\bibitem{kolmogorov}
A.~Kolmogorov.
\newblock Winersche {S}piralen und einige andere interessante {K}urven in
  {H}ilbertschen {R}aum.
\newblock {\em Acad. Sci. USSR}, 26:115--118, 1940.

\bibitem{ludena}
C.~Ludena.
\newblock Minimum contrast estimation for fractional diffusion.
\newblock {\em Scandinavian Journal of Statistics}, 31:613--628, 2004.

\bibitem{mandel}
B.~Mandelbrot and J.~Van Ness.
\newblock Fractional {B}rownian motions, fractional noises and application.
\newblock {\em SIAM Review}, 10:422--437, 1968.

\bibitem{phd}
D.~Melichov.
\newblock {\em On estimation of the {H}urst index of solutions of stochastic
  equations}.
\newblock PhD thesis, Vilnius Gediminas Technical University, 2011.

\bibitem{Neue}
A.~Neuenkirch and S.~Tindel.
\newblock A {L}east {S}quare-type procedure for parameter estimation in
  stochastic differential equations with additive fractional noise.
\newblock {\em preprint}, 2011.

\bibitem{R}
{R Development Core Team}.
\newblock {\em R: A Language and Environment for Statistical Computing}.
\newblock R Foundation for Statistical Computing, Vienna, Austria, 2010.
\newblock {ISBN} 3-900051-07-0.

\bibitem{sato}
K.~Sato.
\newblock {\em L\'evy Processes and Infinitely Divisible Distributions}.
\newblock Cambridge University Press, Cambridge, 1999.

\bibitem{YUIMA}
YUIMA~Project Team.
\newblock {\em yuima: The YUIMA Project package (unstable version)}, 2011.
\newblock R package version 0.1.1936.

\bibitem{fGnSim2}
A.~Wood and G.~Chan.
\newblock Simulation of stationary {G}aussian processes.
\newblock {\em Journal of computational and graphical statistics},
  3(4):409--432, 1994.

\bibitem{chinois}
W.~Xiao, W.~Zhang, and W.~Xu.
\newblock Parameter estimation for fractional ornsteinÐuhlenbeck processes at
  discrete observation.
\newblock {\em Applied Mathematical Modelling}, 35:4196--4207, 2011.

\bibitem{Yoshida1}
N.~Yoshida.
\newblock Estimation for diffusion processes from discrete observations.
\newblock {\em Journal of Multivariate Analysis}, 41:220--242, 1992.

\end{thebibliography}

\end{document}